\newtheorem{thm}{Theorem}
\newtheorem{corollary}[thm]{Corollary}
\newtheorem{rmk}{Remark}
\begin{document}

\title{Enhanced Predictive Ratio Control of Interacting Systems}
\author[rvt]{Minh Hoang-Tuan Nguyen}
\ead{tuanminh@nus.edu.sg}

\author[rvt]{Kok Kiong Tan}\corref{cor1}
\ead{kktan@nus.edu.sg}

\author[rvt]{Sunan Huang}
\ead{elehsn@gmail.com}

\cortext[cor1]{Corresponding author}
\address[rvt]{National University of Singapore, Department of Electrical and Computer Engineering,\\3 Engineering Drive 3, Singapore 117576}
%
%

\begin{abstract}
Ratio control for two interacting processes is proposed with a PID feedforward design based on model
predictive control (MPC) scheme. At each sampling instant, the MPC control action minimizes a state-dependent performance index associated with a PID-type state vector, thus yielding a PID-type control structure. Compared to the standard MPC formulations with separated single-variable control, such a control action allows one to take into account the non-uniformity of the two process outputs. After reformulating the MPC control law as a PID control law, we provide conditions for prediction horizon and weighting matrices so that the closed-loop control is asymptotically stable, and show the effectiveness of the approach with simulation and experiment results.
\end{abstract}

\begin{keyword}
Robust tracking \sep constrained linear systems \sep model predictive control \sep PID gain scheduling.
\end{keyword}
\maketitle

\section{Introduction}\label{c6s1}
Ratio control has become a demanding task in industrial processes involving combustion systems or blending operations. Ratio control methods are used to maintain the flow rate of one stream in the process at a specified proportion relative to that of another (the wild flow). Besides the traditional series and parallel control, an alternative architecture, called Blend station \citep{Hag01Blend}, was proposed as auto-tuning and later improved in \cite{Vis05Design} for the choice of setpoint weighting. While ratio control of decoupled processes is well established, the problems become significantly complex for interacting processes. In this context, model predictive controllers (MPCs) have been recently applied to deal with ratio control, such as engine air–fuel and fuel–gas ratio control \citep{Gio06Hybrid,Mus08Adaptive,Suz09Individual}.

Among the various classes of MPCs, Generalized predictive control (GPC) is a potential method which overcome many pitfalls of other schemes when dealing with open loop unstable, nonminimum phase, or delayed systems \citep{Cla89Properties,Normey-Rico07Control}. Moreover, GPC can be used with multivariable systems by an model-augmented modification, even when constraints are considered. These advantages have been reviewed in \cite{Lee00Convergence,Bem02explicit}. Despite its efficiency, the computing burden discourages the widespread use of GPC compared to PID regulators in process industry. Compared with a true GPC method, PID control uses present and past data but not future information; moreover, its coefficients are limited to lower order polynomials than those of GPC law. To address GPC computational issues, several PID tuning procedures incorporating GPC were proposed so that they could achieve model-based control performance with a simpler structure. The idea of matching the GPC and PID control law structure was presented in \cite{Camacho03Model,Nes10Generalized,Sat10Design}. These papers showed that, by using a first/second-order system model, it is possible to simplify the GPC law as PID control law. A PID predictive controller was proposed in \cite{Mor03} where the author, rather than looking for the match of GPC and PID laws, considered a number of parallel PID controllers corresponding to the prediction horizon of GPC. In another context, the work in \cite{Tan00Development} developed a GPC-based PID controller by bringing PID error state into GPC performance index.

To bring these predictive PID design closer to the original ratio control problem, a previous work from \cite{Tan09Predictive} achieved composition control by changing setpoint when the output ratio is out of a predetermined threshold, without considering time delays. However, this setpoint variation method modifies control input through feedforward term outside MPC, so it easily upsets the input constraint. In addition, when the dead-time factor is included, especially different dead-times for individual processes, the information of future output ratio is demanded and the solution becomes more complicated. Thus the question is how to deal with a normal delayed process, as in \cite{Hag01Blend,Vis05Design}.

In this chapter, a PID feed-forward design based on predictive control concept is presented. It can be used for ratio control of two-input two-output (TITO) with inconsistent input delays. The solution for the delay case is solved by using equivalent control in MPC formula. Moreover, it incorporates ratio control into the performance index of GPC, so that no setpoint variation is required. The control law is still obtained as a feed-forward PID structure, with time-varying gains during the initial time-delay period and with constant gains thereafter. Proportion control is also taken care by a structural tuning. As a consequence, a feasible approach for proportion control is delivered.

The chapter is presented as follows. First, the state-space approach for TITO systems with dead-time is presented so that it includes the PID state vector (Section \ref{c6s2}). Second, the GPC control law is formulated in the given context, which allows us to recast GPC into feed-forward PID structure and criteria for choosing weighting matrices for the derived method are given so that the closed-loop system is asymptotically stable (Sections \ref{c6s3}). In Section \ref{c6s4}, the enhancement for ratio control through modification of the performance index is presented. Section \ref{c6s5} delivers simulation studies for the wafer thermal uniformity control example. Finally, experiment results are shown in Section \ref{c6s6} and the main principles of this chapter is concluded in Section \ref{c6s7}.

\subsection*{Notation}

For the examined system, $h$ denotes the input delay. The subscript $i\ (i=1,2)$ is to address the two channels of TITO systems. Besides, $r$ is the output setpoint, while $\tilde{r}$ is auxiliary reference and $\tilde{R}$ is the future auxiliary reference across prediction horizon. We also denote the system state as $X$ and PID state as $\tilde{X}$ in which $\theta$ is the integral term over output error $e$. Open-loop and closed-loop gains are indicated by $F$ and $\bar{F}$. The notation $Q>0\ (Q\geq0)$ denotes positive (semi) definiteness.

\section{State-space Representation of TITO System}\label{c6s2}

Consider the problem of regulating a process modeled by the typical FOPDT transfer functions:
\begin{IEEEeqnarray}{rCl}
y_1(z)&=&\frac{b_{11}z^{-h_1}}{z+a_{11}}u_1(z)+ \frac{b_{12}z^{-h_2}}{z+a_{12}}u_2(z)\nonumber\\
y_2(z)&=&\frac{b_{21}z^{-h_1}}{z+a_{21}}u_1(z)+ \frac{b_{22}z^{-h_2}}{z+a_{22}}u_2(z)
\label{c6eq1}
\end{IEEEeqnarray}
where $h_1\leq h_2\ (h_1,h_2\in \mathbb{R}^+)$ are input delays of the system. The output ratio between $y_1$ and $y_2$ is to be maintained at the desired value of $\alpha= \frac{r_2}{r_1}$ ($r_1,r_2$ are the output setpoints).

In order to deal with inconsistent input delays, we define the equivalent control as
\begin{equation}
U(k-h)=\begin{bmatrix}u_1(k-h_1)& u_2(k-h_2)\end{bmatrix}^T,
\label{c6eq2}
\end{equation}
used as a convenient notation for the derivation of MPC control law in Section \ref{c6s3}.

Rearrange \eqref{c6eq1} intro the difference equation and define special state definition $X(k)$ for TITO system (refer to \cite{Tan09Predictive} for details). By describing the PID state vector as $\tilde{X}_k = \begin{bmatrix}e_1(k)& e_1(k-1)& \theta_1(k)& e_2(k)& e_2(k-1)& \theta_2(k)\end{bmatrix}^T$, we have a complete state space equation
\begin{equation}
X(k+1)=FX(k)+GU(k-h)+E\tilde{r}(k),
\label{c6eq3}
\end{equation}
with 
\begin{equation}
X(k)=M \tilde{X}(k)+NU(k-1-h).
\label{c6eq4}
\end{equation}
These system matrices $F,G,E,M,N$ are given as
\begin{IEEEeqnarray}{rCl}
F&=&\begin{bmatrix}-a_{11}+a_{12}& 1& 0& 0& 0& 0\\
									-a_{11}a_{12}& 0& 0& 0& 0& 0\\
									1& 0& 1& 0& 0& 0\\
									0& 0& 0& -a_{21}+a_{22}& 1& 0\\
									0& 0& 0& -a_{21}a_{22}& 0& 0\\
									0& 0& 0& 1& 0& 1\end{bmatrix},\\
G&=&\begin{bmatrix}-b_{11}& -b_{12}\\
									-b_{11}a_{12}& -b_{12}a_{11}\\
									0& 0\\
									-b_{21}& -b_{22}\\
									-b_{21}a_{22}& -b_{22}a_{21}\\
									0& 0\end{bmatrix},\nonumber\ 
E=\begin{bmatrix}1& 0\\0& 0\\0& 0\\0& 1\\0& 0\\ 0& 0\end{bmatrix},\\
F&=&\begin{bmatrix}1& 0& 0& 0& 0& 0\\
									0& -a_{11}a_{12}& 0& 0& 0& 0\\
									0& 0& 1& 0& 0& 0\\
									0& 0& 0& 1& 0& 0\\
									0& 0& 0& 0& -a_{21}a_{22}& 0\\
									0& 0& 0& 0& 0& 1\end{bmatrix},\ 
N=\begin{bmatrix}0& 0\\
									-b_{11}a_{12}& -b_{12}a_{11}\\
									0& 0\\
									0& 0\\
									-b_{21}a_{22}& -b_{22}a_{21}\\
									0& 0\end{bmatrix}.\nonumber
\end{IEEEeqnarray}

\section{Predictive PID controller}\label{c6s3}
\subsection{GPC Control Law}
The system model is written as
\begin{equation}
X(k+1)=FX(k)+GU(k-h)+E \tilde{r}(k)
\label{c6eq5}
\end{equation}
where $X\in \mathbb{R}^n,\ U\in \mathbb{R}^m\ (n=6,m=2)$. With this model, the following problem is posed: given the current state $X(k)$, find the equivalent $N$-step control sequence $\bar{U}=\{U(k-h),U(k-h+1),...,U(k-h+N-1)\}$ that minimizes the performance index:
\begin{equation}
J=\sum_{j=k}^{k+N-1}[X(j+1)^TQ_jX(j+1)+U(j-h+1)^TR_jU(j-h+1)].
\label{c6eq6}
\end{equation}
In \eqref{c6eq6}, $N$ is the prediction horizon; $Q_j\geq 0,\ R_j>0$ are the state and control weighting matrices.

Now define stacked vectors $\bar{X}=\begin{bmatrix}X(k+1)& ...& X(k+N)\end{bmatrix}^T$, $\tilde{R}(k)=[\tilde{r}(k)\,\ldots\,\tilde{r}(k+N-1)]^T$. Then \eqref{c6eq5} can be written as
\begin{equation}
\bar{X}= HFX(k)+P\bar{U}+\bar{E}\tilde{R}(k),
\label{c6eq7}
\end{equation}
where
\begin{IEEEeqnarray}{rCl}
H&=&\begin{bmatrix}I\\ F\\ ...\\ F^{l-1}\end{bmatrix},
P = \begin{bmatrix}G& 0& ...& 0\\
									FG& G& ...& 0\\
									...& ...& ...& ...\\
									F^{l-1}G& F^{l-2}G& ...& G\end{bmatrix},\nonumber\\
\bar{E} &=& \begin{bmatrix}E& 0& ...& 0\\
									FE& E& ...& 0\\
									...& ...& ...& ...\\
									F^{l-1}E& F^{l-2}E& ...& E\end{bmatrix}.\nonumber
\end{IEEEeqnarray}

By doing so, the performance index \eqref{c6eq6} can be expressed as
\begin{equation}
J = \bar{X}^TQ\bar{X} + \bar{U}^TR\bar{U}.
\label{c6eq8}
\end{equation}
The corresponding optimal control law is determined by taking the gradient $\partial J/ \partial \bar{U}$ to be zero, so that
\begin{equation}
\bar{U}=-(P^TQP+R)^{-1}P^TQ(HFX(k)+\bar{E}\tilde{R}(k)).
\label{c6eq9}
\end{equation}
Apply the receding horizon control concept, the first-step input is
\begin{IEEEeqnarray}{rCl}
U(k-h)&=&-D(P^TQP+R)^{-1}P^TQ(HFX(k)+\bar{E}\tilde{R}(k))\nonumber\\
&=& K_{GPC}X(k)+K_{ref}\tilde{R}(k),
\label{c6eq10}
\end{IEEEeqnarray}
where $D=\begin{bmatrix}1& 0& ...& 0\end{bmatrix}$, $K_{GPC}=-D(P^TQP+R)^{-1}P^TQHF=\begin{bmatrix}K_{1GPC}& K_{2GPC}\end{bmatrix}^T$ and $K_{ref}=D(P^TQP+R)^{-1}P^TQ\bar{E}=[K_{1ref}\ K_{2ref}]^T$. The second term in \eqref{c6eq10} can be considered as a feed-forward part of the controller design, assuming that the future setpoint sequence is known. It follows from the equivalent control definition in \eqref{c6eq2} that
\begin{IEEEeqnarray}{rCl}
u_1(k)=K_{1GPC}X(k+h_1)+K_{1ref}\tilde{R}(k+h_1)\nonumber\\
u_2(k)=K_{2GPC}X(k+h_2)+K_{2ref}\tilde{R}(k+h_2).
\label{c6eq11}
\end{IEEEeqnarray}

\subsection{Future State Prediction}
From \eqref{c6eq11}, it can be seen that in order to minimize $J$, the control at the current instant depends on the fixed gains $K_{GPC}$ and a future state at time $k+h_1$ and $k+h_2$. 

\subsubsection{For $k>h_2$}

Let $\bar{F}=F+GK_{GPC}$. In order to predict the future states, a closed-loop equation is formed by combining \eqref{c6eq3} and \eqref{c6eq10}:
\begin{equation}
X(k+1)=\bar{F}X(k)+GK_{ref}\tilde{R}(k)+E\tilde{r}(k).
\label{c6eq12}
\end{equation}
From the one-step prediction above, the future states $X(k+h_1),\ X(k+h_2)$ are determined iteratively by 
\begin{IEEEeqnarray}{rCl}
X(k+h_1)=&&\bar{F}^{h_1}X(k)+\bar{F}^{h_1-1}[GK_{ref}\tilde{R}(k)+E\tilde{r}(k)]+ ...\nonumber\\ &+&[GK_{ref}\tilde{R}(k+h_1-1)+E \tilde{r}(k+h_1-1)],\label{c6eq13}\\
X(k+h_2)=&&\bar{F}^{h_2}X(k)+\bar{F}^{h_2-1}[GK_{ref}\tilde{R}(k)+E\tilde{r}(k)]+ ...\nonumber\\ &+&[GK_{ref}\tilde{R}(k+h_2-1)+E \tilde{r}(k+h_2-1)].
\label{c6eq14}
\end{IEEEeqnarray}
As seen from \eqref{c6eq13}, \eqref{c6eq14}, the coefficient of $X(k)$ in these formula is independent of time $k$ for $k>h_2$. In the next case, we will see that the state prediction during time-delay period has the $k$-dependent gains.

\subsubsection{For $1\leq k \leq h_2$}

Denote $\bar{F^1}=F+G\begin{bmatrix}K_{1GPC}& 0\end{bmatrix}^T$, $K_{ref}^1=\begin{bmatrix}K_{1ref}& 0\end{bmatrix}^T$, with the superscript $(.)^1$ indicating the region $min{h_1,h_2}<k<max{h_1,h_2}$. Depending on the existence of the optimal input in \eqref{c6eq10}, the system in \eqref{c6eq5} can become
\begin{numcases}{X(l+1)=}
FX(l) & if $l\leq h_1$ \nonumber \\
\bar{F^1}X(l)+K_{ref}^1\tilde{R}(k) + E\tilde{r}(k) & if $h_1\leq l\leq h_2$\IEEEeqnarraynumspace\\
\bar{F}X(l)+GK_{ref}\tilde{R}(k) + E\tilde{r}(k) & if $l\geq h_2$\nonumber.
\label{c6eq15}
\end{numcases}
Now $l$ can be substituted by $k+h_1$ or $k+h_2$ to get the future states.

\subsection{Predictive PID Control Law}\label{sec623}
Substituting the predicted states obtained in \eqref{c6eq13}, \eqref{c6eq14} into the control law \eqref{c6eq11}
\begin{IEEEeqnarray}{rCl}
u_1(k) &=& K_{1GPC}\bar{F}_1X(k)+S_1(k)\nonumber\\
u_2(k) &=& K_{2GPC}\bar{F}_2X(k)+S_2(k),
\label{c6eq16}
\end{IEEEeqnarray}
where $\bar{F}_1,\bar{F}_2$ are the coefficients associated with $X(k)$ and $S_1(k)$, $S_2(k)$ are the terms that involve future reference. $S_1(k),S_2(k)$ can be updated at every step, as in the Algorithm 1 below.

The control law in \eqref{c6eq16} can be incorporated within the PID structure by using \eqref{c6eq4}:
\begin{IEEEeqnarray}{rCl}
u_1(k)&=&K_{1PID}\tilde{X}(k)+K_{1u}U(k-1-h)+S_1(k)\nonumber\\
u_2(k)&=&K_{2PID}\tilde{X}(k)+K_{2u}U(k-1-h)+S_2(k),
\label{c6eq17}
\end{IEEEeqnarray}
where $K_{1PID}=K_{1GPC}\bar{F}_1M$, $K_{2PID}=K_{2GPC}\bar{F}_2M$ and $K_{1u}=K_{1GPC}\bar{F}_1 N$, $K_{2u}=K_{2GPC}\bar{F}_2N$.

\begin{rmk}
In Eq. \eqref{c6eq17} each of the control inputs is navigated by the outputs of two PIDs (as $\tilde{X} \in \mathbb{R}^6$) and a feed-forward term that consists of the rest of the formula.
\end{rmk}

As one observes, the MPC law based on future output prediction in \eqref{c6eq11}, which is open-loop in nature, has been reformed to a closed-loop control law as in \eqref{c6eq17}. The closed-loop stability would be guaranteed later on Section 2.3. It is also worth mentioning that because of the future state prediction during time-delay period $max{h_1,h_2}=h_2$, this PID formulation has time-varying gains during initial stage . Beyond this period, the PID controller resumes constant gains. In general, the state feedback control law \eqref{c6eq17} refers to the optimal lookup table for the PID gains, and a closed-form solution is created.

The predictive PID algorithm can be summarized in the following:

\begin{algorithm}[ht]
\SetAlgoLined
\KwData{$k$, $\tilde{r}$, $X$}
\KwResult{$K_{1PID},\ K_{2PID},\ K_{1u},\ K_{2u}$}
initialize $\tilde{R}(k),\ \tilde{R}(k+h_1),\ \tilde{R}(k+h_2)$ by definition in \eqref{c6eq7}. Determine $K_{GPC}$ and $K_{ref}$ offline from \eqref{c6eq10}.
\eIf{$k\leq h_2$}{
   $S\leftarrow 0$, $F_b\leftarrow I$, $\tilde{R}\leftarrow \tilde{R}(k)$\;
   \For{$i\leftarrow k+h_2-1$}{
   		update $\tilde{R}$ by removing $\tilde{r}(i)$ and adding $\tilde{r}(i+N)$ to the queue\;
   		$\tilde{r}\leftarrow \tilde{r}(i+1)$\;
   		Assign\\
				\Indp
				$S\leftarrow FS+E\tilde{r},\ F_b\leftarrow FF_b$ if $i\leq h_1$\;
				$S\leftarrow \bar{F^1}S+GK^1_{ref}\tilde{R}+E\tilde{r},\ F_b\leftarrow \bar{F^1}F_b$ if $h_1\leq i\leq h_2$
				$S\leftarrow \bar{F}S+GK_{ref}\tilde{R}+E\tilde{r},\ F_b\leftarrow \bar{F}F_b$ if $i\geq h_2$\;
				\Indm
   		\If{$i==k+h_1-1$}{
   			$S_1\leftarrow K_{1GPC}S+K_{1ref}\tilde{R}(k+h_1)$\;
   			$\bar{F}_1\leftarrow F_b$\;
   		}
   }
   $S_2\leftarrow K_{2GPC}S+K_{2ref}\tilde{R}(k+h_2)$\;
   $\bar{F}_2\leftarrow F_b$\;
   evaluate the gains $K_{1PID},\ K_{2PID},\ K_{1u},\ K_{2u}$ from \eqref{c6eq17}\;
}{
   Fix $K_{1PID},\ K_{2PID},\ K_{1u},\ K_{2u}$ from here on\;
}
\caption{Computation of predictive PID gains.}
\label{c6alg1}
\end{algorithm}

\subsection{Stability}
As the system has time delays incorporated in its transfer functions, the stability criterion becomes more complex than the one suggested in the work of [15]. The closed-loop stability
created by the proposed feedback is analyzed in long-term situation where the PID controllers have already passed the initial stage of delay and converged to the fixed gain region (k > h2). Without
loss of generality, all reference values are assumed to be zero, and the dead-time $h_2\geq h_1$. From \eqref{c6eq16},
\begin{IEEEeqnarray}{rCl}
U(k-h)&=&\begin{bmatrix}u_1(k-h_1)\\ u_2(k-h_2)\end{bmatrix}=\begin{bmatrix}K_{1GPC}\bar{F}^{h_1}X(k-h_1)\\ K_{2GPC}\bar{F}^{h_2}X(k-h_2)\end{bmatrix}\nonumber\\
&=&\begin{bmatrix}K_{1GPC}\\ 0\end{bmatrix}\bar{F}^{h_1}X(k-h_1)+\begin{bmatrix}0\\ K_{2GPC}\end{bmatrix}\bar{F}^{h_2}X(k-h_2)\nonumber\\
&=&K^1_{GPC}\bar{F}^{h_1}X(k-h_1)+K^2_{GPC}\bar{F}^{h_2}X(k-h_2),
\label{c6eq18}
\end{IEEEeqnarray}
where $K^1_{GPC}=\begin{bmatrix}K_{1GPC}& 0\end{bmatrix}^T$, $K^2_{GPC}=\begin{bmatrix}0& K_{2GPC}\end{bmatrix}^T$. Substituting \eqref{c6eq18} into \eqref{c6eq3}, we obtain
\begin{equation}
X(k+1)=FX(k)+GK^1_{GPC}\bar{F}^{h_1}X(k-h_1)+GK^2_{GPC}\bar{F}^{h_2}X(k-h_2).
\label{c6eq19}
\end{equation}

Now, the stability condition of the closed-loop system \eqref{c6eq19} is presented through Theorem 1.
\begin{thm}
The system \eqref{c6eq19} will be stable if and only if all the roots $\lambda$ of the following determinant equation
\begin{equation}
det [\lambda^{h_2+1}I-F\lambda^{h_2}-GK^1_{GPC}\bar{F}^{h_1}\lambda^{h2-h1}-GK^2_{GPC}\bar{F}^{h_2}]=0,
\label{c6eq20}
\end{equation}
satisfy $\left|\lambda\right|<1$, assuming that $h_2\geq h_1$.
\end{thm}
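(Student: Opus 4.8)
The plan is to read \eqref{c6eq19} as a finite-dimensional, linear time-invariant difference equation with two integer delays $h_1 \le h_2$, and to reduce its stability to an ordinary eigenvalue problem by the standard lifting (state-augmentation) technique. Writing $A_0 = F$, $A_{h_1} = GK^1_{GPC}\bar{F}^{h_1}$, $A_{h_2} = GK^2_{GPC}\bar{F}^{h_2}$, and $A_j = 0$ for every other $j \in \{0,\dots,h_2\}$, equation \eqref{c6eq19} reads $X(k+1) = \sum_{j=0}^{h_2} A_j X(k-j)$. I would then stack the most recent $h_2+1$ samples into an augmented state $\xi(k) = [X(k)^T\ X(k-1)^T\ \dots\ X(k-h_2)^T]^T \in \mathbb{R}^{n(h_2+1)}$, so that the dynamics collapse to the ordinary recursion $\xi(k+1) = \mathcal{A}\,\xi(k)$, where $\mathcal{A}$ is the block-companion matrix whose first block row is $[A_0\ A_1\ \dots\ A_{h_2}]$ and whose first subdiagonal consists of identity blocks.

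Once the system is in this lifted form, asymptotic stability of \eqref{c6eq19} is, by the classical result for discrete LTI systems, equivalent to $\mathcal{A}$ being Schur, i.e.\ to every eigenvalue of $\mathcal{A}$ lying strictly inside the unit disk. Since $\{X(k)\}$ is exactly the first block component of $\{\xi(k)\}$ and any admissible initial history $X(-h_2),\dots,X(0)$ determines a unique $\xi(0)$, convergence of $\xi(k)\to 0$ for all initial data coincides with convergence of $X(k)\to 0$; the two notions of stability are therefore the same. Thus the theorem follows as soon as I identify the spectrum of $\mathcal{A}$ with the root set of \eqref{c6eq20}. As a sanity check, the ansatz $X(k)=\lambda^k v$ substituted into \eqref{c6eq19} yields, after dividing by $\lambda^{k-h_2}$, exactly the condition $[\lambda^{h_2+1}I - \sum_j A_j\lambda^{h_2-j}]v = 0$, predicting the same characteristic equation.

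The core computation is the block-companion determinant identity
\[
\det\!\left(\lambda I_{n(h_2+1)} - \mathcal{A}\right) = \det\!\left(\lambda^{h_2+1} I - \sum_{j=0}^{h_2} A_j \lambda^{h_2-j}\right),
\]
which I would establish by elementary block row/column operations on $\lambda I - \mathcal{A}$: successively multiply each block row by the appropriate power of $\lambda$ and add it upward, collapsing the subdiagonal identity blocks and leaving the single $n \times n$ corner block $\lambda^{h_2+1} I - \sum_j A_j \lambda^{h_2-j}$, the complementary part of the reduced matrix being block-triangular with identity diagonal. Substituting the three nonzero coefficients $A_0,A_{h_1},A_{h_2}$ produces precisely $\lambda^{h_2+1}I - F\lambda^{h_2} - GK^1_{GPC}\bar{F}^{h_1}\lambda^{h_2-h_1} - GK^2_{GPC}\bar{F}^{h_2}$, matching \eqref{c6eq20}. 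Because the two polynomials share the same leading term $\lambda^{n(h_2+1)}$ they are equal, so the roots of \eqref{c6eq20} are exactly the eigenvalues of $\mathcal{A}$, and the equivalence ``$|\lambda|<1$ for all roots $\Leftrightarrow$ stability'' is then immediate.

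The main obstacle I anticipate is making the determinant identity rigorous when the coefficient blocks $A_j$ do not commute: the scalar companion reduction must be carried out purely through block operations that never reorder matrix factors, and one must treat $\lambda = 0$ with care, since the row-scaling by powers of $\lambda$ degenerates there. This is handled by observing that both sides are polynomials in $\lambda$ that agree on $\lambda\neq 0$, hence agree identically; moreover $\lambda=0$ lies inside the unit disk and so never threatens stability. Neither point is deep, but the bookkeeping of the block elimination---verifying that each step yields exactly the claimed corner block with no spurious scalar or sign factor---is the step that requires attention.
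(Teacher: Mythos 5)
Your proposal is correct and follows essentially the same route as the paper: lift \eqref{c6eq19} to an augmented block-companion system over the last $h_2+1$ samples, identify its characteristic polynomial via block elimination with the determinant in \eqref{c6eq20}, and invoke the standard Schur criterion for discrete LTI systems. Your write-up is in fact more careful than the paper's (explicitly handling non-commuting blocks and the $\lambda=0$ degeneracy), but the underlying argument is the same.
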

\begin{proof}
From \eqref{c6eq19}, a new state space equation is constructed as
\begin{IEEEeqnarray}{rCl}
\begin{bmatrix}X(k-h_2+1)\\ \vdots \\X(k)\\X(k+1)\end{bmatrix} &=& \begin{bmatrix}0& I & & \ldots & & 0\\
& & \ddots & & & \\
\vdots & \vdots & & I & & \vdots \\
& & & & \ddots & \\
0& 0& & \cdots & & I\\
GK^2_{GPC}\bar{F}^{h_2}& 0& \ldots & GK^1_{GPC}\bar{F}^{h_1} & \ldots & F
\end{bmatrix}\nonumber\\
&&.\begin{bmatrix}X(k-h_2)\\...\\X(k-1)\\X(k)\end{bmatrix}
\end{IEEEeqnarray}
This is the canonical controllable block form, in which the characteristic equation is obtained easily. The proof is directly followed by a block elimination which leads to lower triangular block form, as in the singular form. Thus the above system has eigenvalues which are obtained by solving the equation
\begin{equation}
det [\lambda^{h_2+1}I-F\lambda^{h_2}-GK^1_{GPC}\bar{F}^{h_1}\lambda^{h2-h1}-GK^2_{GPC}\bar{F}^{h_2}]=0.\nonumber
\end{equation}
Therefore, this system will be asymptotically stable if all the eigenvalues are within the unit circle, or the condition of \eqref{c6eq20} to be satisfied. Note that the size of the matrix $[\lambda^{h_2+1}I-F\lambda^{h_2}-GK^1_{GPC}\bar{F}^{h_1}\lambda^{h2-h1}-GK^2_{GPC}\bar{F}^{h_2}]$ is equal to $6 \times 6$. Interested readers are referred to \cite{Sai66control} for further detail on determinant
equation which helps to reduce the size of the matrix when larger systems are concerned.
\end{proof}

\begin{figure}[ht]%
\centering
\includegraphics[width=\columnwidth]{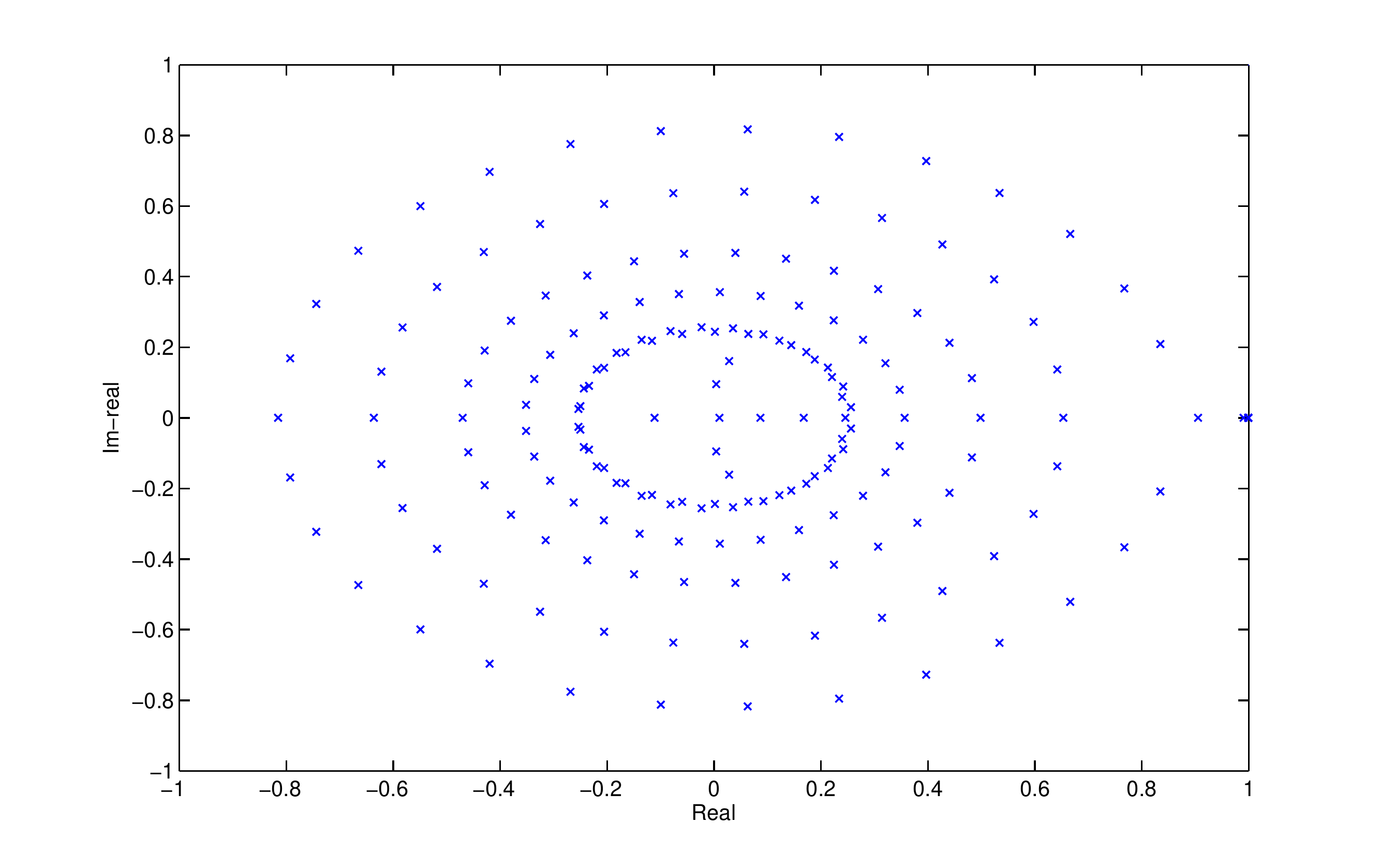}%
\caption{A typical eigenvalue map of the closed-loop system using the proposed method.}%
\label{c6fig2}%
\end{figure}

\begin{corollary}
The condition in \eqref{c6eq20} implies a necessary condition that all eigenvalues of the matrix $\bar{F} = F + GK_{GPC}$ is within the unit circle.
\end{corollary}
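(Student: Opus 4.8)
The plan is to show that every eigenvalue of $\bar{F}$ is itself a root of the determinant equation \eqref{c6eq20}, i.e. that the spectrum of $\bar{F}$ is contained in the root set of \eqref{c6eq20}. Once this inclusion is established, the corollary is immediate: Theorem 1 guarantees that closed-loop stability forces every root $\lambda$ of \eqref{c6eq20} to satisfy $\left|\lambda\right|<1$, so in particular every eigenvalue of $\bar{F}$ must lie strictly inside the unit circle.

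First I would fix an arbitrary eigenvalue $\mu$ of $\bar{F}=F+GK_{GPC}$ together with a nonzero eigenvector $v$, so that $\bar{F}v=\mu v$ and hence $\bar{F}^{h_1}v=\mu^{h_1}v$ and $\bar{F}^{h_2}v=\mu^{h_2}v$. The key step is to evaluate the matrix appearing in \eqref{c6eq20} at $\lambda=\mu$ and apply it to $v$. Expanding $[\mu^{h_2+1}I-F\mu^{h_2}-GK^1_{GPC}\bar{F}^{h_1}\mu^{h_2-h_1}-GK^2_{GPC}\bar{F}^{h_2}]v$ and substituting the eigen-relations, each delay term collapses: $\bar{F}^{h_1}\mu^{h_2-h_1}v=\mu^{h_2}v$ and $\bar{F}^{h_2}v=\mu^{h_2}v$, so the whole expression reduces to $\mu^{h_2}[\mu I-F-G(K^1_{GPC}+K^2_{GPC})]v=\mu^{h_2}(\mu I-\bar{F})v=0$, where I have used $K^1_{GPC}+K^2_{GPC}=K_{GPC}$ from the definitions in \eqref{c6eq18}.

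Since $v\neq 0$, this shows that the $6\times 6$ matrix in \eqref{c6eq20} is singular at $\lambda=\mu$, so $\det[\,\cdot\,]=0$ and $\mu$ indeed satisfies the determinant equation. As $\mu$ was an arbitrary eigenvalue of $\bar{F}$, the desired inclusion holds, and combining with Theorem 1 gives $\left|\mu\right|<1$ for every eigenvalue of $\bar{F}$, which is the asserted necessary condition.

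The only point requiring care — and the step I expect to be the main obstacle to state cleanly — is the power bookkeeping in the mixed-delay terms: the scalar factor $\mu^{h_2-h_1}$ must be seen to combine with $\bar{F}^{h_1}v=\mu^{h_1}v$ to produce exactly $\mu^{h_2}v$, matching the $\bar{F}^{h_2}$ term, so that both contributions factor out a common $\mu^{h_2}$ and repackage into $GK_{GPC}v$. This cancellation works precisely because $v$ is a common eigenvector of every power of $\bar{F}$; attempting instead to factor the full determinant globally would obscure it, so restricting the analysis to an eigenvector of $\bar{F}$ is exactly what makes the argument transparent.
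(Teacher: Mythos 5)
Your proof is correct. You establish the inclusion $\mathrm{spec}(\bar{F})\subseteq\{\lambda:\det[\cdot]=0\}$ by picking an eigenpair $(\mu,v)$ of $\bar{F}$, noting $\bar{F}^{h_i}v=\mu^{h_i}v$, and collapsing the matrix in \eqref{c6eq20} applied to $v$ down to $\mu^{h_2}(\mu I-\bar{F})v=0$ via $K^1_{GPC}+K^2_{GPC}=K_{GPC}$; singularity at $\lambda=\mu$ then forces $\mu$ to be a root of \eqref{c6eq20}, and the condition $|\lambda|<1$ on all such roots delivers the claim. The paper reaches the same conclusion by a different mechanism: it performs an explicit right factorization of the whole matrix polynomial, writing \eqref{c6eq20} as $[\,\cdot\,](\lambda I-\bar{F})$ using the telescoping identity $\lambda^{m}I-\bar{F}^{m}=(\lambda^{m-1}I+\bar{F}\lambda^{m-2}+\dots+\bar{F}^{m-1})(\lambda I-\bar{F})$, so that $\det(\lambda I-\bar{F})$ divides the determinant in \eqref{c6eq20}. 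The two arguments exploit the identical algebraic cancellation (your ``power bookkeeping'' remark is exactly the scalar shadow of the paper's telescoping factor), but yours is more elementary: it needs only the existence of an eigenvector per eigenvalue and avoids both the geometric-sum bookkeeping and the determinant product rule. What the paper's factorization buys in exchange is a slightly stronger structural fact --- the characteristic polynomial of $\bar{F}$ divides the determinant in \eqref{c6eq20} with multiplicities, and the remaining roots are exhibited as those of the explicit cofactor --- which your pointwise spectral argument does not directly provide, though it is not needed for the corollary as stated.
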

\begin{proof}
Indeed, note that $K^1_{GPC}+K^2_{GPC}=K_{GPC}$ and $\bar{F}=F+GK_{GPC}$, so
\begin{IEEEeqnarray}{rCl}
&&\lambda^{h_2+1} I-F\lambda^{h_2}-GK_{GPC}^1 \bar{F} \lambda^{h_1} \lambda^{h_2-h_1}-GK_{GPC}^2 \bar{F}^{h_2}\nonumber\\
&&=\lambda^{h_2+1} I-[\bar{F}-G(K_1+K_2 )]\lambda^{h_2}-GK_{GPC}^1 \bar{F}^{h_1} \lambda^{h_2-h_1} -GK_{GPC}^2 \bar{F}^{h_2}\nonumber\\
&&=(\lambda^{h_2+1} I-\bar{F}\lambda^{h_2})+GK_{GPC}^1 \lambda^{h_2-h_1} (\lambda^{h_1} I-\bar{F}^{h_1} )+GK_{GPC}^2 (\lambda^{h_2} I-\bar{F}^{h_2})\nonumber\\
&&=[\lambda^{h_2}+GK_GPC^1 (\lambda^{h_2-1} I+\bar{F}\lambda^{h_2-2}+\ldots+\bar{F}^{h_1-1} \lambda^{h_2-h_1} )\nonumber\\
&&\quad +GK_GPC^2 (\lambda^{h_2-1} I+\bar{F}\lambda^{h_2-2}+\ldots+\bar{F}^{h_2-1} )](\lambda I-\bar{F} ).\IEEEeqnarraynumspace
\label{c6eq21}
\end{IEEEeqnarray}
Thus, the eigenvalues of $\bar{F}$ must be within unit circle in order to satisfy \eqref{c6eq20}. This
is consistent with the result obtained in [15]. A typical eigenvalue map for the system \eqref{c6eq19} is presented in Fig. \ref{c6fig2}.
\end{proof}

\section{Tightening ratio control}\label{c6s4}
\subsection{Ratio control design}
Ratio control, traditionally, is implemented either via a series configuration with $r_2 =\alpha y_1$ or a parallel one with $r_2 =\alpha r_1$. The parallel configuration proves to be better than series configuration in removing or reducing lag phenomenon of slave variable. However, it incurs a different disadvantage, an open-loop design, in which a significant upset to the ratio of the variables can follow
when a large or fast load disturbance occurs, which cannot be tolerated in certain applications such as the wafer temperature uniformity control. Hence, the setpoint variation scheme was proposed in [15]. The dynamic information of ratio error was reflected in setpoint and it adjusts the optimal control law in \eqref{c6eq17} through feed-forward calculation. This can only be applied for systems with no delay, since threshold decision and ratio error in future time after the delay may be difficult to predict.

A new ratio control scheme is proposed, which can also improve the transient performance and disturbance rejection. The first advantage over setpoint variation is that the prediction of future ratio error is avoided. Moreover, this scheme is imposed directly into the performance index, thus achieving optimal control through PID gains instead of feedforward control. This is implemented by introducing the error ratio into the performance index $J$.

Let us fraction Q into $Q_1+\beta Q_2+\gamma Q_3\ (\beta,\gamma \in \mathbb{R}^+)$ where $\beta,\gamma$ are weighting factors. For simplicity, define $Q_1$ as an identity matrix; this matrix would be used as a normal gain for output tracking. Besides, define $Q_2=M_2^T M_2$ and $Q_3=M_3^TM_3$ such that
\begin{IEEEeqnarray}{rCl}
M_2&=&\begin{bmatrix}1& 0& 0 -\frac{1}{\alpha}& 0& 0\end{bmatrix}\nonumber\\
M_3&=&\begin{bmatrix}0& 0& 1& 0& 0& -\frac{1}{\alpha}\end{bmatrix}\nonumber
\end{IEEEeqnarray}

With the definition of the system state $X(k)$ in Section 2.1, it follows that
\begin{IEEEeqnarray}{rCl}
\left\|X(k)\right\|^2_{Q_2}&=&[M_2 X(k)]^T.[M_2 X(k)]=(e_1(k)-\frac{1}{\alpha}e_2(k))^2,\nonumber\\
\left\|X(k)\right\|^2_{Q_3}&=&[M_3X(k)]^T.[M_3X(k)]=(\theta_1(k)-\frac{1}{\alpha}\theta_2(k))^2,\nonumber
\end{IEEEeqnarray}
and these two terms could be used to optimize the output ratio effectively.

The role of $Q_2$ is to control the output errors $e_1=r_1-y_1$, $e_2=r_2-y_2$ to follow the desired output ratio $\alpha$. Normally, the term $Q_1$ commands the two processes outputs  $y_1 (k)$ and $y_2 (k)$ to the setpoints $r_1$, $r_2$ without taking care of the ratio $y_2/y_1$ during the transient stage. Since one knows the information $r_2/r_1=\alpha$, controlling the error ratio $e_2/e_1$ towards $\alpha$ can be an advantage in assuring the desired output ratio. The attractive point is that this feature still works when the initial output ratio is different from the desired output ratio, or, the ratio setpoint $\alpha$ is varying.

\begin{figure}%
\centering
\includegraphics[width=\columnwidth,height=2.5in]{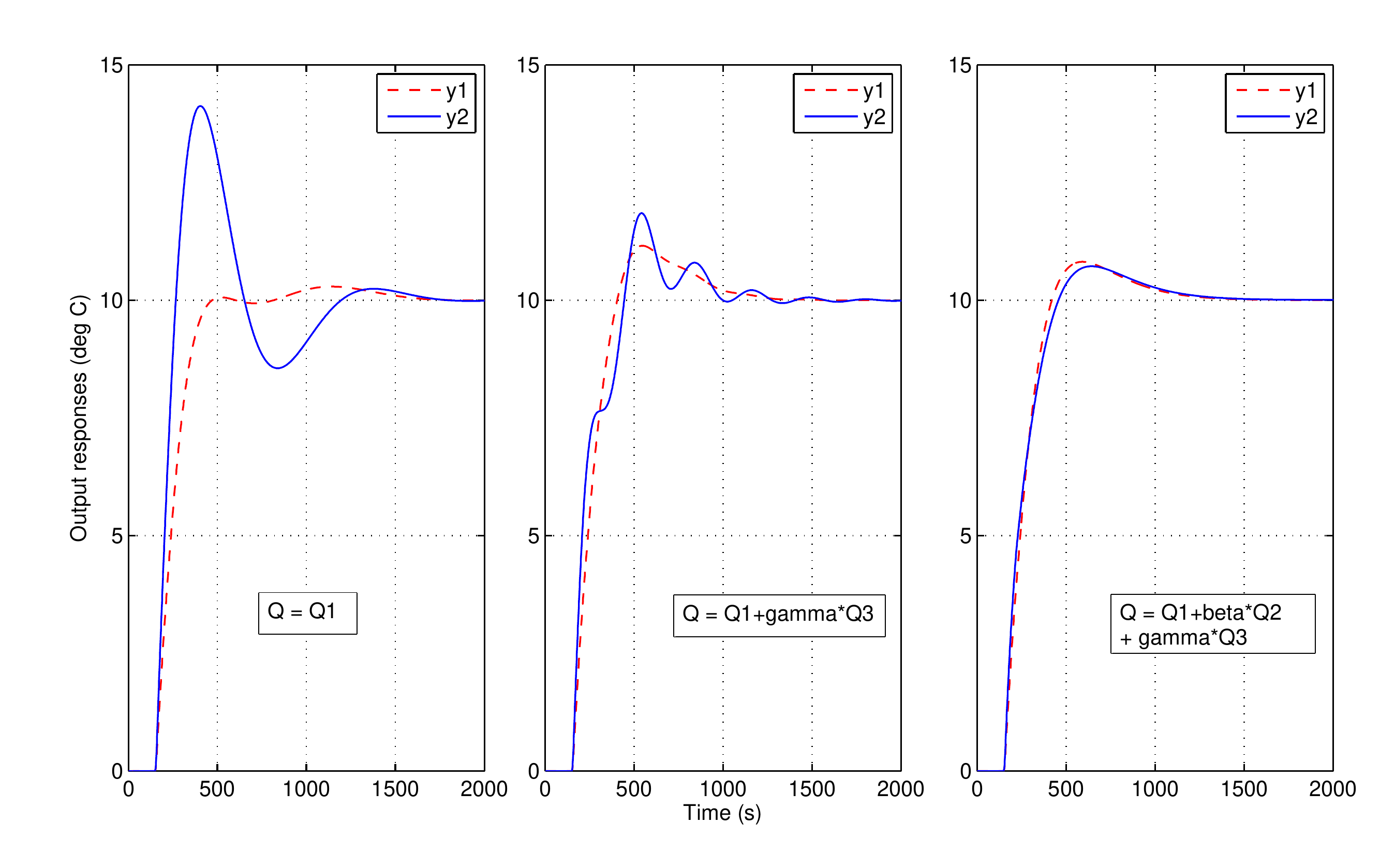}%
\caption{Tuning for weighting parameters $\beta$ and $\gamma$.}%
\label{c6fig3}%
\end{figure}

If one considers $Q_2$ as the proportional gain for ratio error, then $Q_3$ plays the role of integral gain. It helps to shape the response rates of the two processes to be closer to each other, instead of force the faster flow to following the slower one. In other words, the output ratio returns to the desired value faster and is prevented from possible offset. This can be illustrated in Fig. \ref{c6fig3}.

As a whole, the new performance index would be changed to
\begin{equation}
J=\sum_{k=t+i}^{t+N}(\left\|X(k)\right\|^2_{Q_1+\beta Q_2+\gamma Q_3}+\left\|U(k-h)\right\|^2_R),
\label{c6eq22}
\end{equation}
dependent on the balance of $Q_1$ (output error), $Q_2$ (ratio error) and $Q_3$ (ratio error integrator). A tuning method for $\beta,\ \gamma$ will be discussed more in the next section.

Again, since the ratio dynamic information is used as feedback within the performance index, disadvantages such as lag phenomenon and open-loop problem, caused by the traditional designs, could be reduced for the most part.

\begin{rmk}
This systematic tuning for $Q$ in \eqref{c6eq18} is more adequate than the arbitrary tuning in \eqref{c6eq6}. As this algorithm focuses on reduces the ratio error while driving outputs to the setpoints, weighting factors are put among $Q_1$ (output error), $Q_2$ (ratio error) and $Q_3$ (ratio error integrator) to balance the priority of these goals. It is also easier for practical users to decide the positive real values of $\beta$ and $\gamma$ rather than the original matrix $Q$, which is usually chosen in diagonal form.
\end{rmk}

\subsection{Tuning weighting matrices}
A formal tuning procedure for the new ratio controller proposed in Section 3.1 must satisfy the stability condition in Section 2.3. In this part, an simple, practical tuning method is presented.

\begin{figure}[!ht]
\centering
\subfloat{\includegraphics[width=\columnwidth, trim=0cm 1cm 0cm 1cm, height = 1.8in]{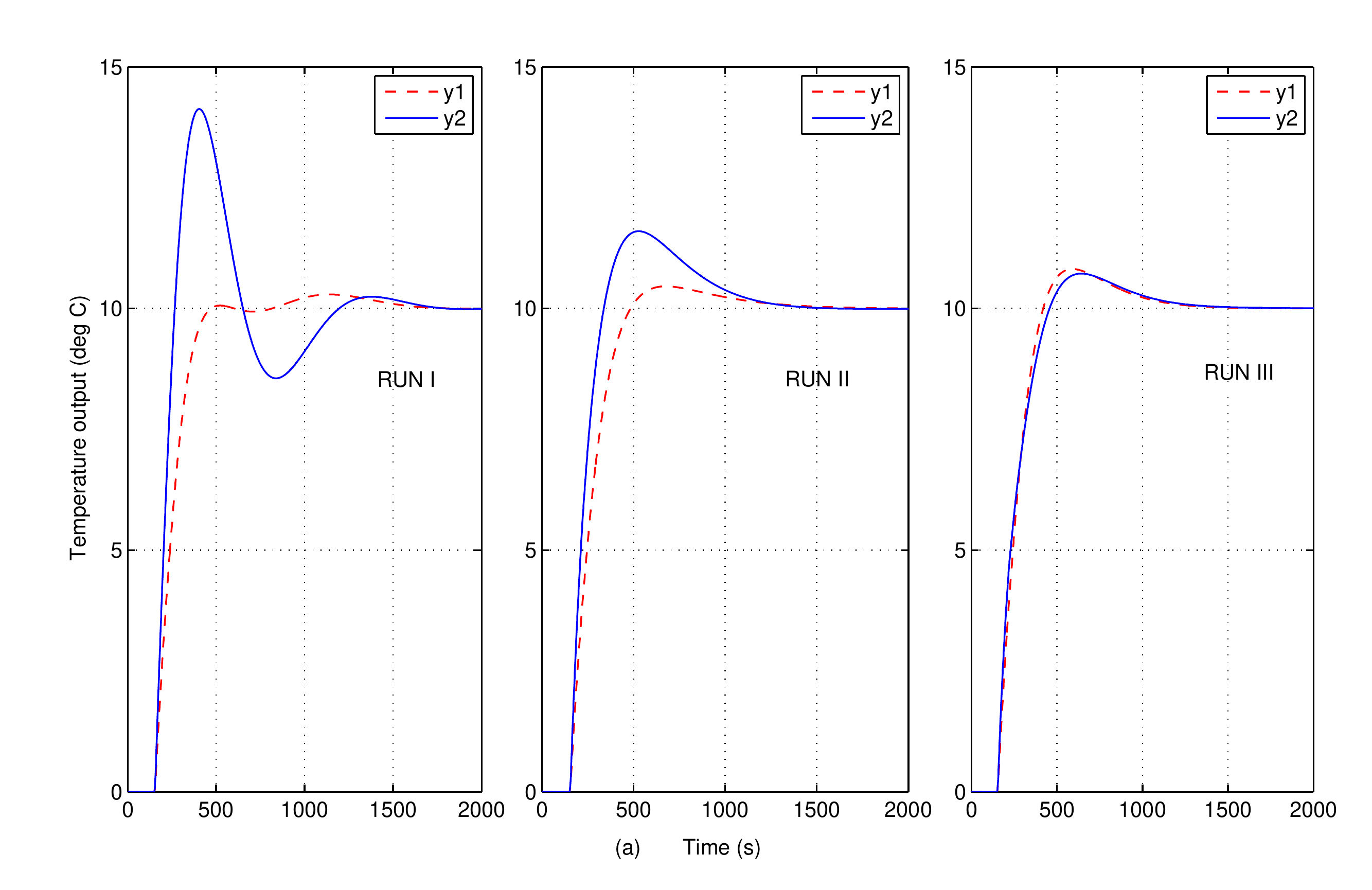}}\\
\subfloat{\includegraphics[width=\columnwidth, trim=0cm 1cm 0cm 0cm, height = 1.8in]{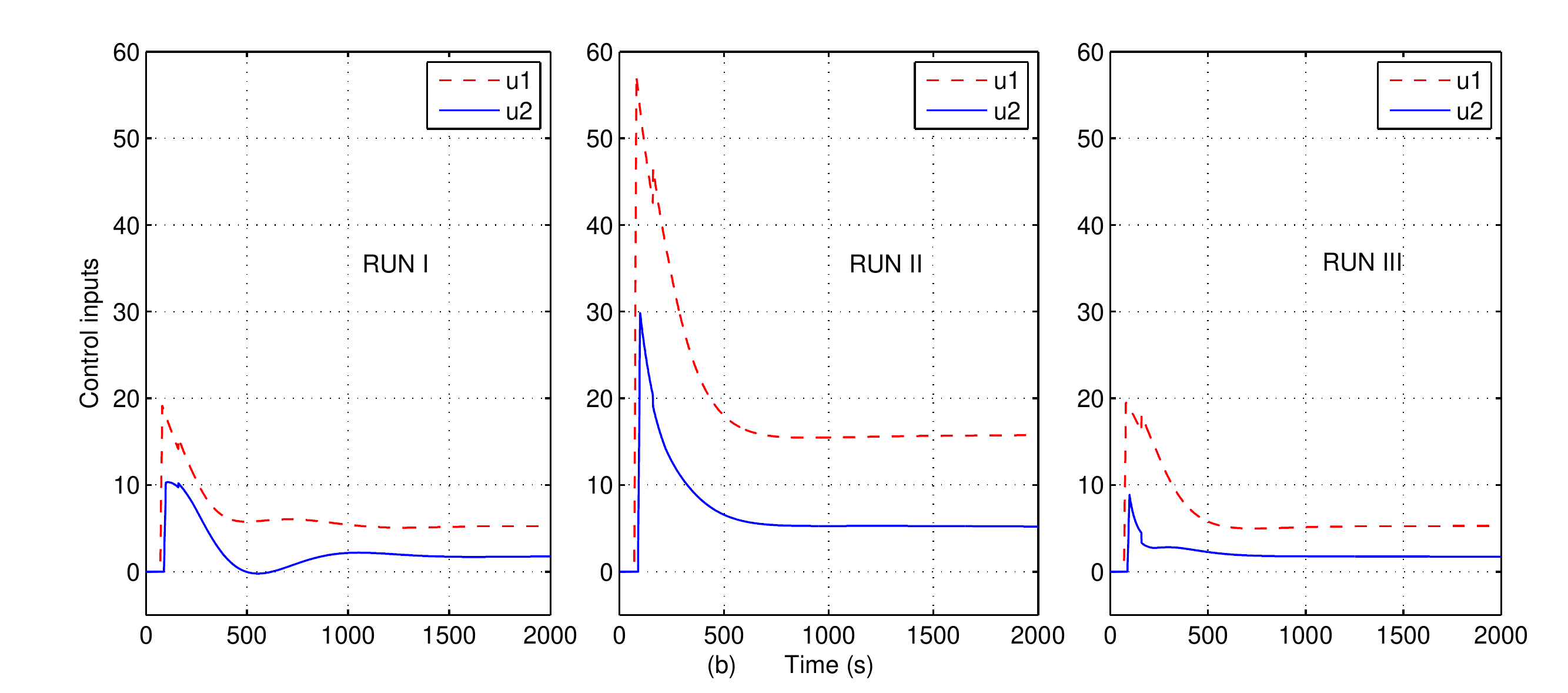}}\\
\subfloat{\includegraphics[width=\columnwidth, trim=0cm 0cm 0cm 0cm, height = 1.8in]{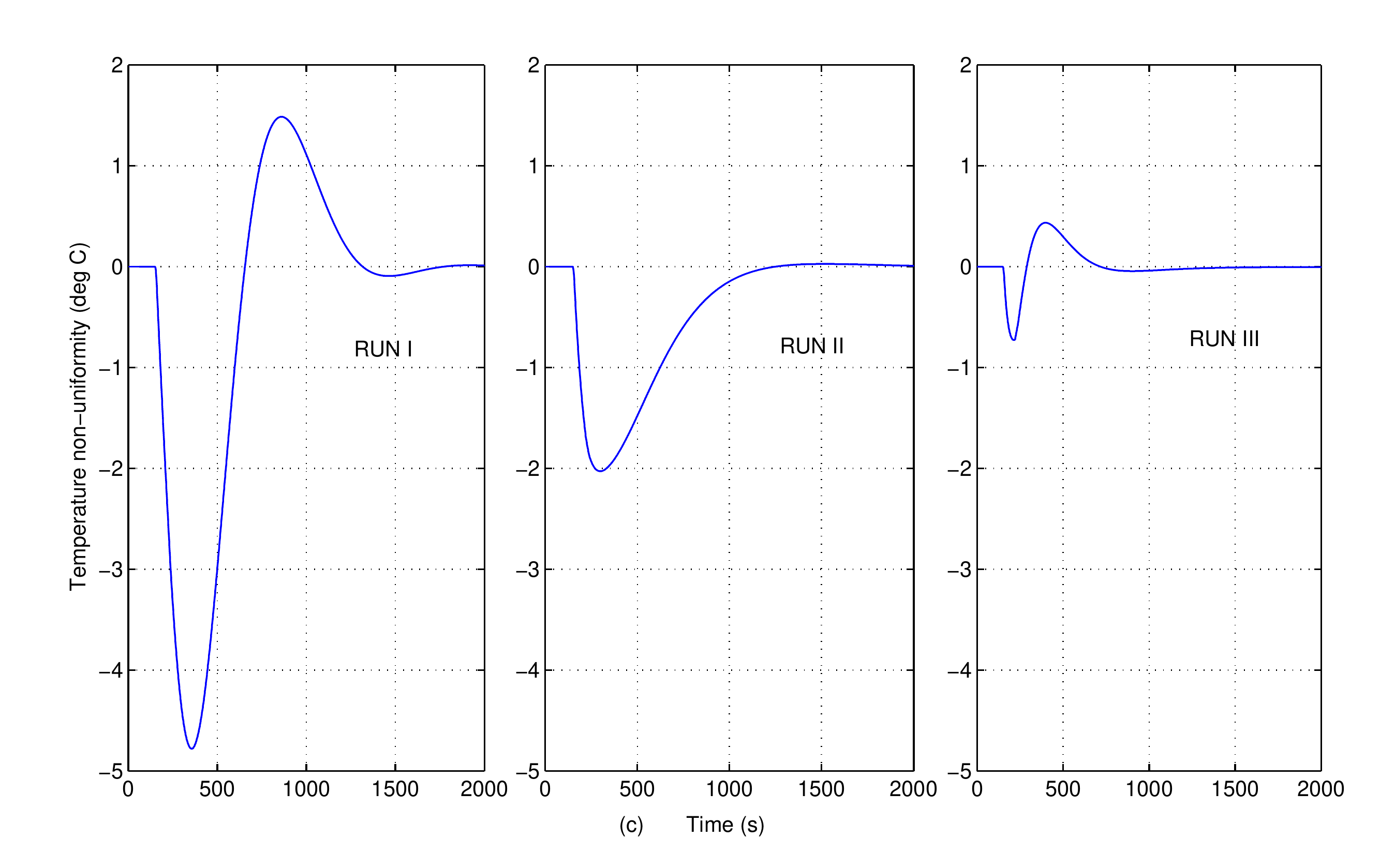}}
\caption{Comparison of (a) output response, (b) control effort and (c) temperature non-uniformity between Run I (normal predictive ratio control), Run II (setpoint variation) and Run III (ratio error cost) in the presence of a set-point change at $t = 150 s$.}
\label{c6fig4}
\end{figure}

Firstly, define the weighting matrices $Q_1$, $R_1$ as
\begin{equation}
Q_1=diag(P_1,0,I_1,P_2,0,I_2),\ R_1=\epsilon I
\end{equation}
where $I$ is an identity matrix. The ultimate gains and periods for the two processes have to be identified as $K_{u1},\ T_{u1}$ and $K_u2$, $T_u2$. Let $I_1=I_2=0$, fix the proportional gains $P_1$, $P_2$ in the $Q_1$ form above and decrease the value of $\epsilon$ until one achieves $\epsilon=min\{\epsilon \in \mathbb{R}^+ : u_1,u_2 \in \mathbb{U}, \text{no overshoot}\}$, where $P_1/P_2=(K_{u1}/K_{u2})^2$  and $\mathbb{U}$ is the input constraint set. This is also to ensure that one achieve the stability at low gain.

Increase $I_1$, $I_2$ for faster output response and desirable overshoot degree, while maintaining the ratio $I_1/I_2  =(\frac{K_{u1}}{T_{u1}}.\frac{T_{u2}}{K_{u2}})^2$. By doing this, one actually tunes $Q_1$, $R_1$ according to Ziegler-Nichols formula, but with different coefficients.

In order to tune ratio weighting parameters $\beta,\ \gamma$, it depends on the emphasis of either maximum ratio error, or fast convergence of ratio error. In general, one would increase $\beta$ to correct the response rates of the two processes, then increase $\gamma$ to possibly eliminate the remaining ratio error. This is illustrated in Fig. \ref{c6fig3}.

\section{Simulation Studies}\label{c6s5}
\subsection{Example 1}
To demonstrate the principles of the GPC-based PID scheme discussed on the previous sections, the controller is applied to maintain a ratio between two bake plate temperatures $y_1 (t),\ y_2 (t)$ of the thermal system as in [15] with input delays, represented by the process:
\begin{IEEEeqnarray}{rCl}
Y_1(s)&=&\frac{2.67e^{-60s}}{323.58s+1} U_1(s)+\frac{1.039e^{-80s}}{759.2s+1} U_2 (s)\nonumber\\
Y_2(s)&=&\frac{1.039e^{-60s}}{759.2s+1} U_1(s)+\frac{1.5595e^{-80s}}{524.5s+1} U_2 (s),
\label{c6eq23}
\end{IEEEeqnarray}
where $u_1(t),\ u_2(t)$ are the control inputs with delay $h_1=60s,\ h_2=80s$. In this example, a sampling time $t=1s$ is used. Two zone temperature changes $y_1$, $y_2$ have zero initial values, and the setpoints are $10.00^\circ\,C$. The ratio between two process variables $y_1 (t)$ and $y_2 (t)$ is kept at a tight ratio $\alpha=y_2/y_1=1.000$. 

The GPC control law is designed using prediction horizon $N=10$. Three different methods aiding ratio control to GPC-based PID are compared. The first method is the normal predictive ratio control where $r_2=\alpha r_1$, without any ratio-tightening scheme. The second method is set-point variation scheme proposed in [15] with threshold $\alpha_b=0.001$ and the gain $K=120$. The proposed method, on the other hand, considers error-ratio cost residing in performance index. The weighting parameters are chosen by the tuning procedure in Section 3.2. Here we have $Q=diag(10,0,0.007,50,0,0.1),\ R=0.6I$ and $\beta=10,\ \gamma=0.1$. The prediction horizon N is rather dependent on the calculation power, so it is chosen as $N=5$ here.

Define the output non-uniformity as $e_m=\alpha y_1-y_2$. Fig. \ref{c6fig4} shows the performance of three mentioned methods. From the output responses, one can notice that the control inputs actually react in advance to the future error which only incurs at $t=150s$. It has been also observed that the normal predictive ratio control (Run I) yields unsatisfactory results with the maximum non-uniformity of $4.81 ^\circ\,C$, as expected. The same method with setpoint variation approach (Run II) gives a relative good performance, as the uniformity is below $2.05^\circ\,C$. However, this improvement requires a very high input effort to achieve due to the different amount of process delays. For the proposed ratio error cost (Run III), the uniformity performance is better above all, and smaller control inputs are required. 

\begin{figure}[!t]
\centering
\includegraphics[width=\columnwidth]{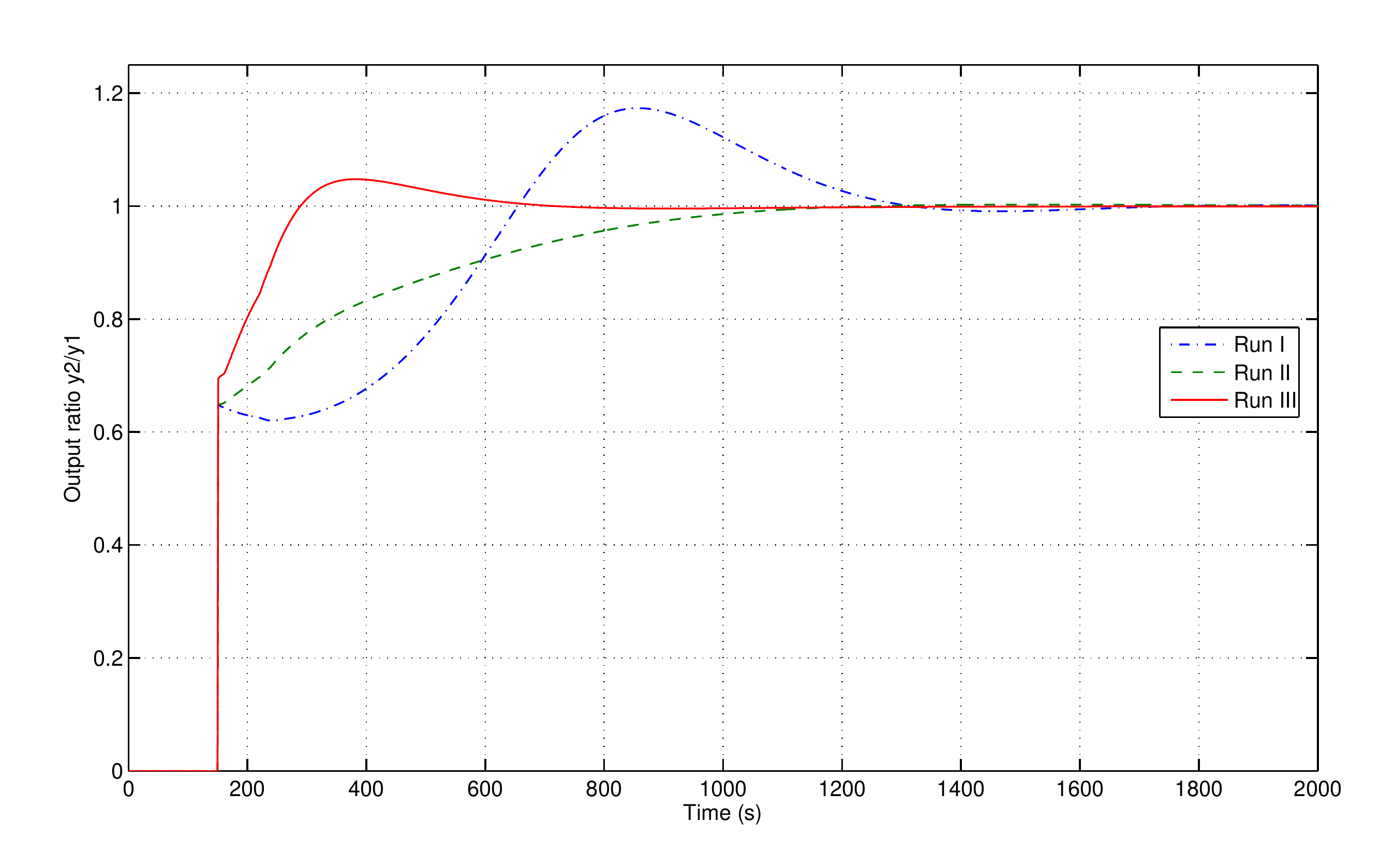}
\caption{IVF integrated platform.}
\label{c6fig5}
\end{figure}

In order to illustrate clearly the effect of the new ratio error minimization method, the actual ratio between two process variables $y_2/y_1$ in Example 1 is shown in Fig. \ref{c6fig5}. The ratio produced by the proposed method has the small deviation from the desired ratio and fast response.  

\subsection{Example 2}
In real situations, it is very difficult to identify a plant model with accurate parameters, not mentioning that the plant model may be a time-varying or non-linear system. Hence, in order to demonstrate the robustness of the suggested control scheme, parametric errors are introduced so that the real model of \eqref{c6eq24} is given by
\begin{IEEEeqnarray}{rCl}
Y_1 (s)&=&\frac{2.67xe^{-60s}}{323.58xs+1} U_1(s)+\frac{(1.039/x)e^{-80s}}{(759.2/x)s+1}U_2(s)\nonumber\\
Y_2 (s)&=&\frac{(1.039/x)e^{-60s}}{(759.2/x)s+1} U_1(s)+\frac{1.5595xe^{-80s}}{524.5xs+1}U_2(s),
\label{c6eq24}
\end{IEEEeqnarray}
with $x=1.4$ (model error up to $40\%$). 

According to the adaptive Blend station procedure, the setpoint weighting is chosen as $\gamma'=0.32$ through a series of setpoint change tests, and PI controllers are tuned by Ziegler-Nichols formula as $(k_{p1},k_{i1})=(1.514,0.016)$, $(k_{p2},k_{i2})=(3.205,$ $0.026)$. Meanwhile, the proposed controller is the same as in Example 1.

\begin{figure}
\centering
\subfloat{\includegraphics[width=\columnwidth, trim=0cm 0cm 0cm 1cm, height=1.8in]{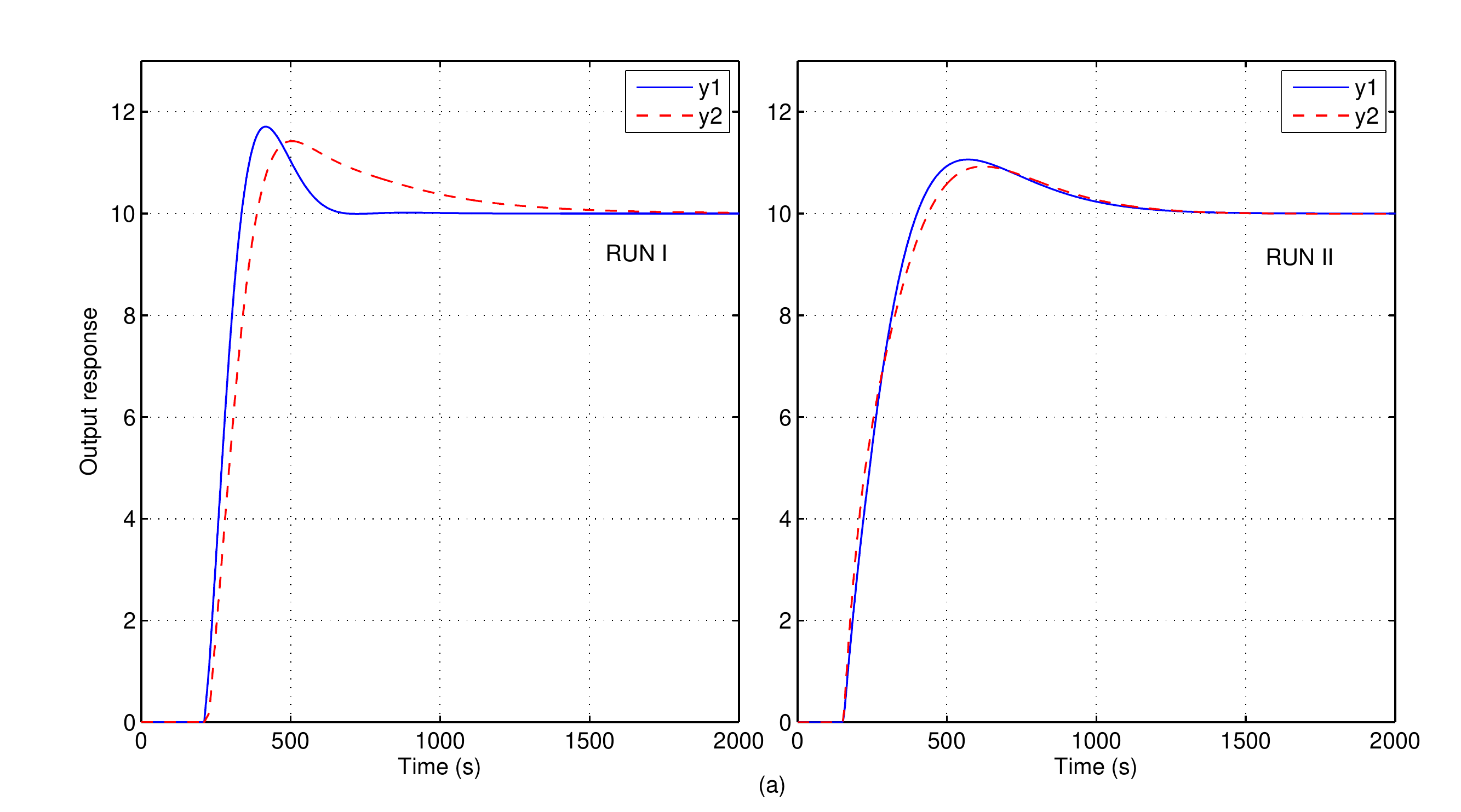}}\\
\subfloat{\includegraphics[width=\columnwidth, trim=0cm 0cm 0cm 1cm, height=1.8in]{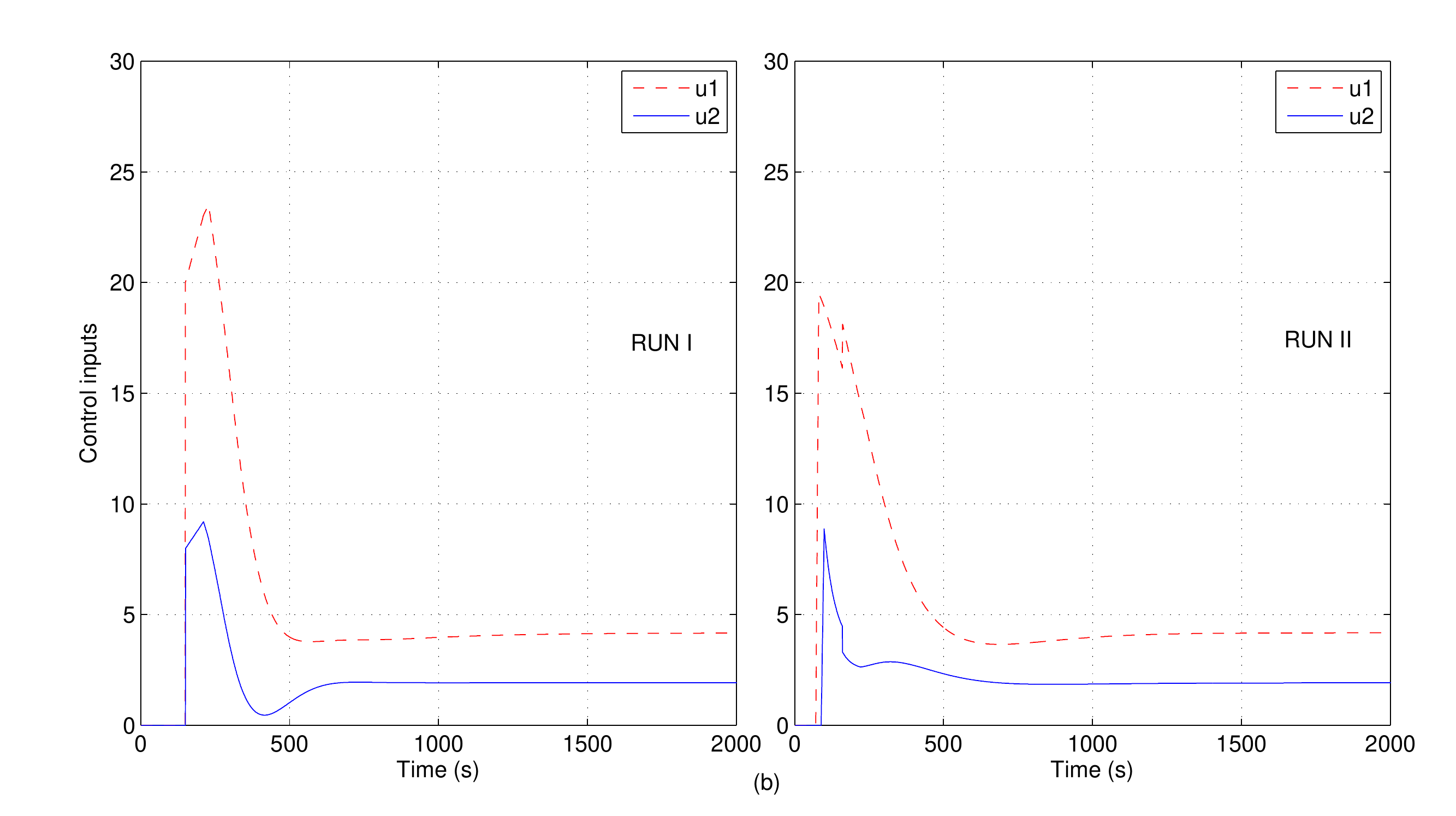}}\\
\subfloat{\includegraphics[width=\columnwidth, trim=0cm 0cm 0cm 1cm, height=1.8in]{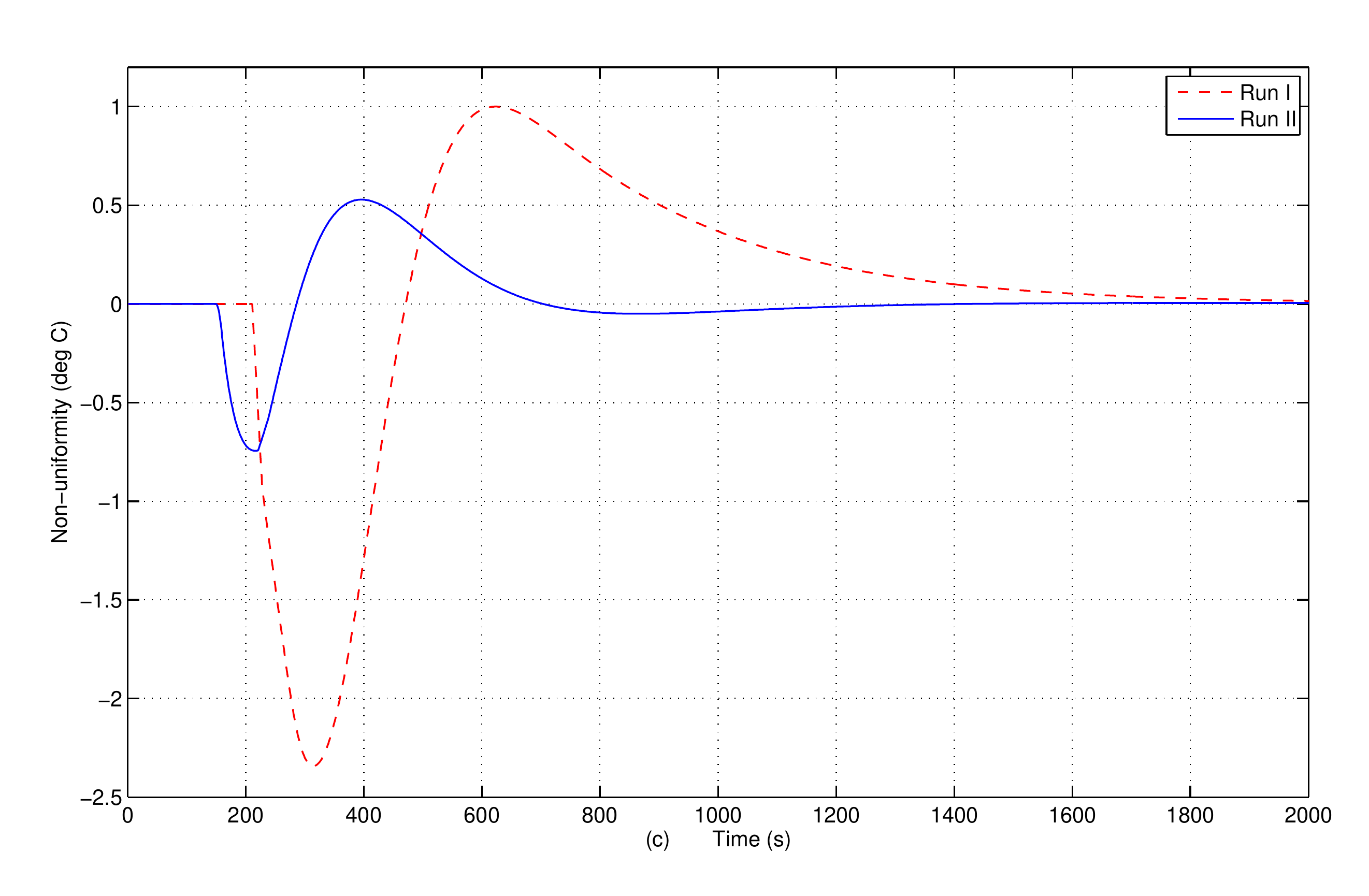}}
\caption{System response for a model perturbation in case of: (a) blend station configuration, (b) proposed method and (c) comparison in temperature non-uniformity.}
\label{c6fig6}
\end{figure}

Fig. \ref{c6fig6}a and \ref{c6fig6}b shows the output responses of the Blend station architecture in \cite{Hag01Blend} and proposed method under model errors. Fig. \ref{c6fig6}c illustrates the degree of robustness of these two schemes. The former configuration without predictive control is not able to resolve the model error and results in long recovery of ratio control. Meanwhile, the latter method recovers output non-uniformity to 0 after enduring the model mismatch. In fact, the integral cost of error ratio control suggested in Section 3 enables this flexibility as it is merged into the performance index. This may not be a proof for robust stability of the system, but it ensures that with significant model error, the proposed method still maintains its good performance.

\section{Experimental Results}\label{c6s6}
\begin{figure}[!t]
\centering
\includegraphics[width=0.75\columnwidth]{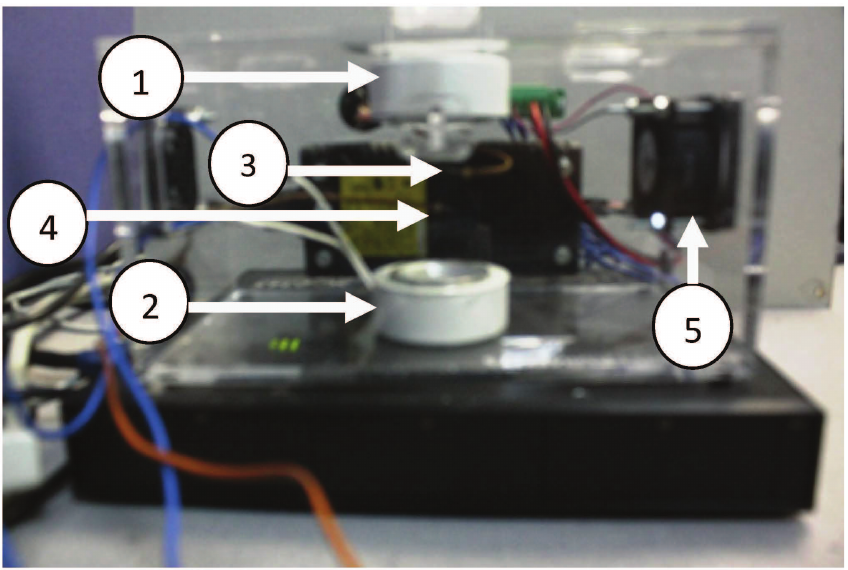}
\caption{Setup of the thermal chamber system with (1, 2) J-Type thermocouples, (3,4) halogen lights and (5) cooling fan.}
\label{c6fig7}
\end{figure}

Fig. \ref{c6fig7} presents the setup of a desktop thermal chamber, mounted on a National Instrument (NI) SC-2345 platform with configurable connectors. In this real-time experiment, the air temperature can be controlled by adjusting the power of the lights and the fan. The variables of interest are the air temperatures $y_1,y_2$ sensed by SCC-TC02 J-type thermocouples at two different height locations. These outputs are manipulated through the upper and lower halogen bulbs in an interactive process. Different delays $h_1,h_2$ are contained in the two input channels. Besides, the cooling fan fulfills the role of disturbance source. NI LABVIEW is used to develop a controller for this system.

One can reasonably assume the above system as a nonlinear process, due to the advection of air. In this experiment, simple system identification through step responses is exploited in a particular operating point to estimate and formulate a first-order system with delays, as follows:
\begin{IEEEeqnarray}{rCl}
Y_1 (s)=\frac{35e^{-2s}}{51s+1} U_1(s)+\frac{25.5e^{-6s}}{99s+1} U_2(s)\nonumber\\
Y_2 (s)=\frac{19e^{-2s}}{108s+1} U_1 (s)+\frac{31.5e^{-6s}}{68s+1} U_2 (s),
\label{c6eq25}
\end{IEEEeqnarray}
Initial values of the two outputs are $y_{1o}=y_{2o}=26^\circ\,C$. A setpoint change of $5^o\,C$ is given for the first output $y_1$, and the ratio $alpha=y_2/y_1=1.000$ is to be maintained during the process. In addition, notice that the input constraint is present here, whereby $0\leq u_1,u_2 \leq 1$. The sampling rate is $0.1s$.

The objective of this experiment is to show how the MPC implementation with ratio control can cope with this interactive system when compared with a fixed PID regulator. In this experiment, besides the potential model error, there is also a disturbance to test the performance of these two methods. Again the fixed PID regulator was chosen as Blend station design tuned to provide good ratio control of the given process with fast response and no excessive overshoot: $\gamma'=0.75$ and $(k_{p1},k_{i1})=(0.31,0.045),\ (k_p2,k_i2 )=(0.07,0.0036)$. The parameters of predictive PID ratio control were adjusted through the tuning procedure provided in Section 3.2. The prediction horizon is given as $N=5$. $Q=diag(1,0,0.001,1,0,0.001),\ R=5I$ and $\beta=5,\ \alpha=0.15$.

\begin{figure}
\centering
\subfloat[]{\includegraphics[width=\columnwidth]{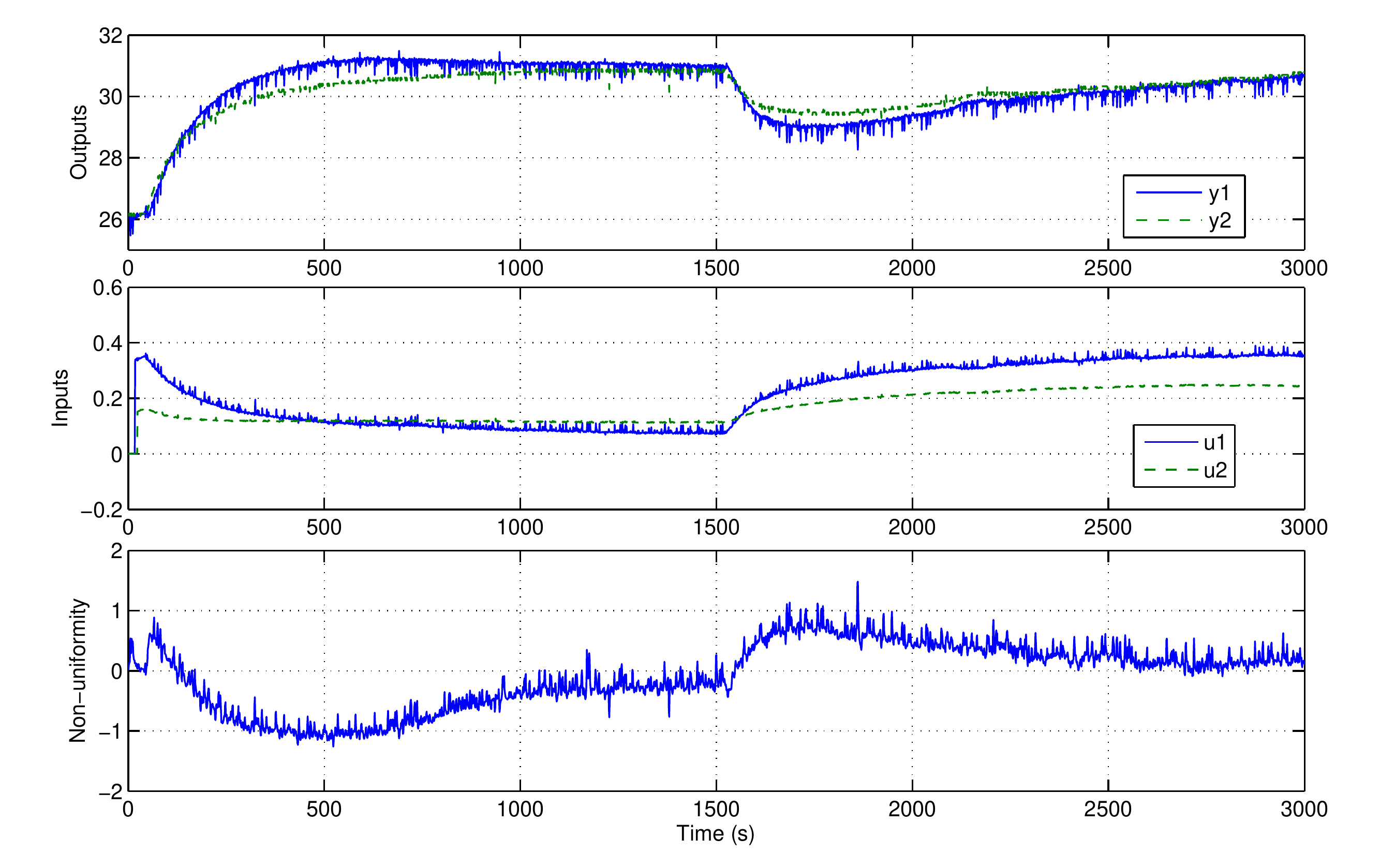}}\\
\subfloat[]{\includegraphics[width=\columnwidth]{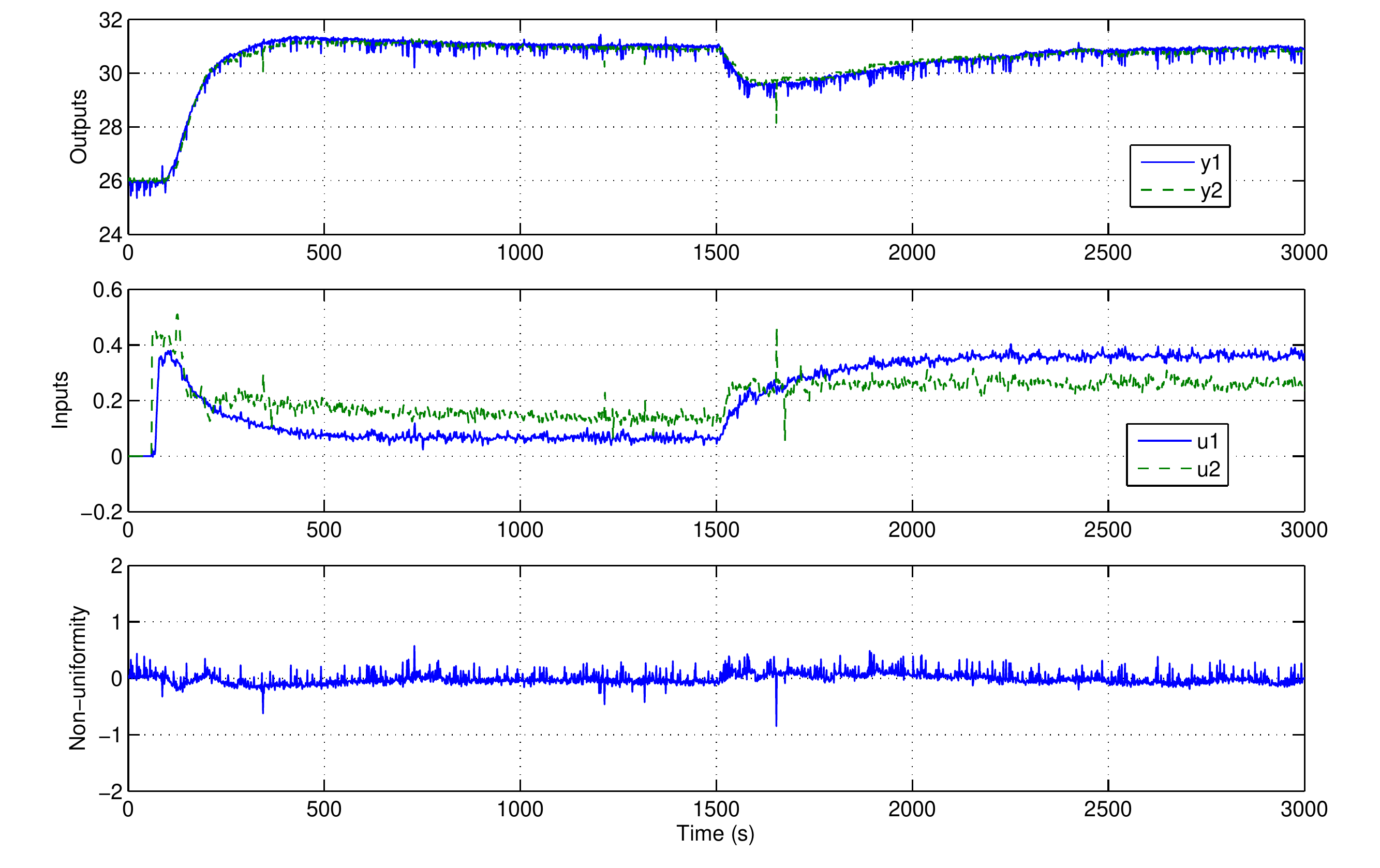}}
\caption{Performance of (a) the Blend station PID and (b) the predictive PID controller.}
\label{c6fig8}
\end{figure}

Fig. \ref{c6fig8}a shows the behavior of a fixed PID regulator, giving a reasonable but rather non-uniform control and so, a poor ratio performance. Due to the interacting feature in the processes, the response rates are different. The control inputs $u_1$, $u_2$ only respond to the output errors individually. The same situation happens in the event of an unpredicted step disturbance $d=1$.

\begin{table}[!t]
\caption{Non-uniformity statistics for output ratio control in the thermal interaction experiment.}
\centering
\setlength{\tabcolsep}{12pt}
\begin{tabular}{llll}
\hline
Controller	&Abs. Peak	& Mean	& RMS\\ \hline
Parallel PID & 1.1918 & 0.0764 & 0.549\\
Predictive PID	& 0.2693 & 0.0065	& 0.102 \\ \hline
\end{tabular}
\setlength{\tabcolsep}{5pt}
\label{c6tab1}
\end{table}

The simulation results using the proposed MPC ratio control with variable PID gains are shown in Fig. \ref{c6fig8}b. The rates, as well as the shapes, of output response are closely followed. Moreover, recovery after disturbance is also faster, along with the uniformity of the outputs. This can be attributed to the corporation between the control inputs during the course of transient response. Performance statistics are shown in Table \ref{c6tab1}, with the absolute peak, mean and root-mean-square of the ratio non-uniformity are considered. The proposed method helps improve the performance from five to ten times, according to the data.

\section{Conclusion}\label{c6s7}
This chapter presents a predictive feed-forward and PID control scheme based on MPC that copes with ratio control for interacting delayed processes. Compared to the standard parallel configuration, the proposed method allows one to take into account the ratio error cost, thus tightening the output ratio towards desired value. In addition, this method is more efficient than the mentioned approaches as the dynamic ratio error information improves the optimal control input through PID gains instead of feed-forward calculation. With the new ratio control scheme, a better performance in output ratio control is achieved with smaller control effort.

\bibliographystyle{elsarticle-harv}
\bibliography{D:/Dropbox/Research/referencelist}
\end{document}